\newtheorem{claim}{}[section]
\newtheorem{theorem}[claim]{Theorem}
\newtheorem{proposition}[claim]{Proposition}
\theoremstyle{remark}
\renewenvironment{proof}{\noindent{\it Proof. \hskip0pt}}
                      {$\square$\par\medskip}
\begin{document}
\baselineskip 6.0 truemm
\parindent 1.5 true pc

\newcommand\lan{\langle}
\newcommand\ran{\rangle}
\newcommand\tr{{\text{\rm Tr}}\,}
\newcommand\ot{\otimes}
\newcommand\ttt{{\text{\sf t}}}
\newcommand\rank{\ {\text{\rm rank of}}\ }
\newcommand\choi{{\rm C}}
\newcommand\dual{\star}
\newcommand\flip{\star}
\newcommand\cp{{\mathcal C}{\mathcal P}}
\newcommand\ccp{{\mathcal C}{\mathcal C}{\mathcal P}}
\newcommand\pos{{\mathcal P}}
\newcommand\tcone{T}
\newcommand\mcone{K}
\newcommand\superpos{{\mathbb S\mathbb P}}
\newcommand\blockpos{{\mathcal B\mathcal P}}
\newcommand\jc{{\text{\rm JC}}}
\newcommand\dec{{\mathcal D}{\mathcal E}{\mathcal C}}
\newcommand\ppt{{\mathcal P}{\mathcal P}{\mathcal T}}
\newcommand\xxxx{\bigskip\par ================================}
\newcommand\join{\vee}
\newcommand\meet{\wedge}
\newcommand\ad{{\text{\rm Ad}}\,}
\newcommand\HS{{\text{\rm HS}}}
\newcommand\sr{{\text{\rm SR}}\,}
\newcommand\e{\varepsilon}

\title{Supporting hyperplanes for Schmidt numbers and Schmidt number witnesses}

\author{Kyung Hoon Han and Seung-Hyeok Kye}
\address{Kyung Hoon Han, Department of Data Science, The University of Suwon, Gyeonggi-do 445-743, Korea}
\email{kyunghoon.han at gmail.com}
\address{Seung-Hyeok Kye, Department of Mathematics and Institute of Mathematics, Seoul National University, Seoul 151-742, Korea}
\email{kye at snu.ac.kr}

\keywords{supporting hyperplanes, $k$-blockpositive matrices, Schmidt number witnesses, Werner states, isotropic states}
\subjclass{15A30, 81P15, 46L05, 46L07}

\begin{abstract}
We consider the compact convex set of all bi-partite states of Schmidt number less than or equal to $k$,
together with that of $k$-blockpositive matrices of trace one, which play the roles of Schmidt number witnesses.
In this note, we look for \mbox{hyperplanes} which support those convex sets and are perpendicular
to a one parameter family through the maximally mixed state. We show that this is equivalent to
determining the intervals for the dual objects on the one parameter family.
We illustrate our results for the one parameter families including Werner states and isotropic states.
Through the discussion, we give a simple decomposition of the separable Werner state into the sum of product states.
\end{abstract}
\maketitle

%%%%%%%%%%%%%%%%%%%%%%%%%%%%%%%%%%%%%%%%%%%%%%%%%%%%%%%%%%%%%%%%%%%%%
%%%%%%%%%%%%%%%%%%%%%%%%%%%%%%%%%%%%%%%%%%%%%%%%%%%%%%%%%%%%%%%%%%%%%%%%%%%%%%%%%%%%%%%%%
%%%%%%%%%%%%%%%%%%%%%%%%%%%%%%%%%%%%%%%%%%%%%%%%%%%%%%%%%%%%%%%%%%%%%%%%%%%%%%%%%%%%%%%%%
%%%%%%%%%%%%%%%%%%%%%%%%%%%%%%%%%%%%%%%%%%%%%%%%%%%%%%%%%%%%%%%%%%%%%%%%%%%%%%%%%%%%%%%%%
%%%%%%%%%%%%%%%%%%%%%%%%%%%%%%%%%%%%%%%%%%%%%%%%%%%%%%%%%%%%%%%%%%%%%%%%%%%%%%%%%%%%%%%%%
%%%%%%%%%%%%%%%%%%%%%%%%%%%%%%%%%%%%%%%%%%%%%%%%%%%%%%%%%%%%%%%%%%%%%%%%%%%%%%%%%%%%%%%%%
\section{Introduction}

Entanglement \cite{Werner-1989} is now indispensable in the current quantum information theory,
and the notion of Schmidt numbers \cite{terhal-schmidt} of bi-partite states is an important tool to measure the degree of entanglement.
The class of $k$-blockpositive matrices \cite{{jam_72},{ssz}} plays the role of witnesses \cite{{lkhc},{SBL_2001}}
to determine entanglement and Schmidt numbers
through the bilinear pairing between Hermitian matrices, and it is important to understand the facial structures of
the compact convex set $\blockpos_k$ of all $k$-blockpositive matrices of trace one. For example,
an entanglement witness $W\in\blockpos_1$ is optimal (respectively has the spanning property)
if and only if the smallest face (respectively smallest
exposed face) containing $W$ has no positive matrix \cite{{ha-kye-optimal},{kye_ritsu}}.
In this regard, we consider supporting hyperplanes
to the convex set $\blockpos_k$, as well as Schmidt numbers themselves.
We recall that $k$-blockpositive matrices are just Choi matrices of $k$-positive linear maps,
and the facial structures for the convex cone of all $k$-positive maps have been studied in
\cite{{kye-canad},{kye-korean},{kye-cambridge}}. See \cite{{kye_ritsu},{kye_lec_note}} for survey articles
on the related topics.

We recall that a closed half-space in an affine space is called {\sl a supporting half-space} for a convex set $C$
when it contains $C$ entirely and shares at least one boundary point with $C$.
The boundary of a supporting half-space for $C$ is called a {\sl supporting hyperplane} to $C$.
See {\sc Figure 1}.
Supporting hyperplanes to a convex set give rise to exposed faces of the convex set by taking intersections with it,
and every exposed face arises in this way.

\begin{figure}
\begin{center}
$$
\includegraphics[scale=0.7]{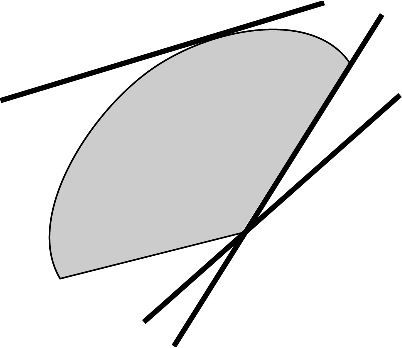}
$$
\end{center}
\caption{Thick lines represent supporting hyperplanes to the convex set.}
\end{figure}

In this note, we look for supporting hyperplanes to $\blockpos_k$ which are perpendicular to the one parameter family
\begin{equation}\label{one-para}
X_\lambda=(1-\lambda)\varrho_*+\lambda\varrho\in M_m\ot M_n,\qquad -\infty<\lambda<+\infty
\end{equation}
of Hermitian matrices, where $\varrho_*=\frac 1{mn}I_{mn}$ denotes the  maximally mixed state, and
$\varrho\in M_m\ot M_n$ is a bi-partite state different from $\varrho_*$.
This will be, in fact, equivalent to know to what extent $X_\lambda$ is a state of Schmidt number
at most $k$. We use the notation $X^\varrho_\lambda$
when we emphasize the role of $\varrho$.
Note that every Hermitian matrix of trace one
is written by $X^\varrho_\lambda$ for a state $\varrho$ and a real number $\lambda$.

For our purpose, we will begin with the interval $[\beta^-_k,\beta^+_k]$
satisfying the condition that $X_\lambda \in\blockpos_k$ if and only if $\beta^-_k\le\lambda\le\beta^+_k$,
as it was considered in \cite{han_kye_beta}.
We denote by $\mathcal H^0_\nu$  the hyperplane  through $X_\nu$ which is perpendicular to the one parameter family $\{X_\lambda\}$
and determine two numbers $\tilde\beta^+_k$ and $\tilde\beta^-_k$ such that $\mathcal H^0_{\tilde\beta^+_k}$ and $\mathcal H^0_{\tilde\beta^-_k}$
are supporting hyperplanes to $\blockpos_k$. The inequalities $\tilde\beta^-_k\le \beta^-_k$
and $\beta^+_k\le \tilde\beta^+_k$ are clear.
We also consider the interval  $[\sigma^-_k,\sigma^+_k]$ satisfying
$$
X_\lambda \in\mathcal S_k\ \Longleftrightarrow \sigma^-_k\le\lambda\le\sigma^+_k,
$$
for the compact convex set $\mathcal S_k$ of all states of Schmidt number less than or equal to $k$.
Our main result tells us that
$$
\lan X_{\tilde\beta^+_k}|X_{\sigma^-_k}\ran=0\qquad {\text{\rm and}}\qquad \lan X_{\tilde\beta^-_k}|X_{\sigma^+_k}\ran=0
$$
hold, and so we see that determining the intervals $[\tilde\beta^-_k,\tilde\beta^+_k]$ and
$[\sigma^-_k,\sigma^+_k]$ are equivalent problems.
In fact, we will show this in Theorem \ref{main} for general compact convex sets and their dual convex sets
with a minor restriction.
We recall that $X_\lambda\in\mathcal S_k$ if and only if $\lan X | X_\lambda \ran \ge 0$ for every $X\in\blockpos_k$.
Therefore, it is natural to consider the affine function
\begin{equation}\label{funct}
f_\lambda:X\mapsto \lan X | X_\lambda \ran,\qquad X\in\mathcal H,
\end{equation}
on the affine space $\mathcal H$ of all Hermitian matrices in $M_m\ot M_n$ of trace one. We are motivated by the
simple observation for the function $f_\lambda$:
All the level sets of $f_\lambda$ are perpendicular to the line represented by the one parameter family $\{X_\lambda\}$.
This is why we are looking for supporting hyperplanes to $\blockpos_k$ which is perpendicular
to the one parameter family $\{X_\lambda\}$, in order to find the numbers $\sigma^+_k$ and $\sigma^-_k$
which determine Schmidt numbers of states.

We illustrate our result in the case when $\varrho=|\xi\ran\lan\xi|$ is a pure state. By the Schmidt decomposition, we may
assume that $|\xi\ran=\sum_{i=0}^{n-1}p_i |ii\ran \in\mathbb C^n\ot\mathbb C^n$ without loss of generality,
with $\sum_{i=0}^{n-1}p_i^2=1$ and $p_0\ge p_1 \ge\dots\ge  p_{n-1} \ge 0$.
When Schmidt coefficients $\{p_i\}$ are evenly distributed,
$\varrho$ is the maximally entangled state and
$X_\lambda$'s give rise to the isotropic states \cite{terhal-schmidt}.
We also have the Werner states \cite{Werner-1989} by taking the partial transpose.
By the results in \cite{han_kye_beta}, we first find the number $\beta^-_1$ and optimize the witnesses $X_{\beta^-_1}$
by subtracting a diagonal matrix with nonnegative entries.
By decomposing the resulting witness into the sum of extreme witnesses, we have natural candidates which may give rise to
the number $\tilde\beta^-_1$. We take the minimum among them to get numbers $\tilde\beta^-_1$  and $\sigma^+_1$.

After we develop general principles to find supporting hyperplanes perpendicular to the one parameter family
in the next section, we consider the
variations of the isotropic states and Werner states mentioned above in Section 3.
Through the discussion, we give a simple decomposition of the separable Werner state into the sum of product states.
Such decompositions have been considered by several authors.
See \cite{{wootters},{azumi_ban},{UKB},{Gr_Ap},{li_qiao},{PPPR},{YLQ}}.

Parts of this note were presented by the second author at \lq\lq Mathematical Structures
in Quantum Mechanics\rq\rq\ which was held at Gdansk, Poland, in March 2025.
He is grateful to the audience for stimulating discussions and organizers Adam Rutkowski and Marcin Marciniak for travel support.

%%%%%%%%%%%%%%%%%%%%%%%%%%%%%%%%%%%%%%%%%%%%%%%%%%%%%%%%%%%%%%%%%%%%%
%%%%%%%%%%%%%%%%%%%%%%%%%%%%%%%%%%%%%%%%%%%%%%%%%%%%%%%%%%%%%%%%%%%%%%%%%%%%%%%%%%%%%%%%%
%%%%%%%%%%%%%%%%%%%%%%%%%%%%%%%%%%%%%%%%%%%%%%%%%%%%%%%%%%%%%%%%%%%%%%%%%%%%%%%%%%%%%%%%%
%%%%%%%%%%%%%%%%%%%%%%%%%%%%%%%%%%%%%%%%%%%%%%%%%%%%%%%%%%%%%%%%%%%%%%%%%%%%%%%%%%%%%%%%%
%%%%%%%%%%%%%%%%%%%%%%%%%%%%%%%%%%%%%%%%%%%%%%%%%%%%%%%%%%%%%%%%%%%%%%%%%%%%%%%%%%%%%%%%%
%%%%%%%%%%%%%%%%%%%%%%%%%%%%%%%%%%%%%%%%%%%%%%%%%%%%%%%%%%%%%%%%%%%%%%%%%%%%%%%%%%%%%%%%%
\section{Supporting hyperplanes}

We work on the affine space $\mathcal H$ in $M_m\ot M_n$ consisting of all Hermitian matrices with trace one.
For a given state $\varrho \ne \varrho_*$, we consider the one parameter family $\{X_\lambda\}$ defined by (\ref{one-para}).
For a nonzero real number $\lambda$, we take the affine function (\ref{funct})
on $\mathcal H$. Then we have
\begin{equation}\label{id}
\begin{aligned}
f_\lambda(X_\nu)=\lan X_\nu | X_\lambda \ran
&=\lan(1-\nu)\varrho_*+\nu\varrho | (1-\lambda)\varrho_*+\lambda\varrho \ran\\
&=\tfrac 1{mn}[(1-\nu)(1-\lambda)+(1-\nu)\lambda+\nu(1-\lambda)]+\nu\lambda\|\varrho\|^2_\HS\\
&=(\|\varrho\|_\HS^2-\textstyle\frac 1{mn})\nu\lambda+\textstyle\frac 1{mn}.
\end{aligned}
\end{equation}
Since $\|\varrho\|_\HS^2-\textstyle\frac 1{mn}>0$, we see that $\nu\mapsto f_\lambda(X_\nu)$ is
a strictly increasing affine function which sends $0$ to $\frac 1{mn}$ whenever $\lambda>0$.
It is decreasing when $\lambda<0$.

\begin{proposition}
The level set $f^{-1}_\lambda(\alpha)$ for a nonzero real number $\lambda$ is perpendicular to the one parameter family $\{X_\lambda\}$.
\end{proposition}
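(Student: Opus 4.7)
The plan is to establish perpendicularity by comparing the direction vector of the one-parameter family with an arbitrary tangent vector to the level set, using the Hilbert--Schmidt inner product.

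First, I would identify the direction of the family $\{X_\lambda\}$. Since $X_\lambda - X_\mu = (\lambda - \mu)(\varrho - \varrho_*)$, the family is an affine line with direction vector $\varrho - \varrho_*$ (up to a nonzero scalar). So perpendicularity reduces to checking that every tangent vector of the level set $f_\lambda^{-1}(\alpha)$ is orthogonal to $\varrho - \varrho_*$.

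Next, take two points $X, Y \in f_\lambda^{-1}(\alpha)$. Since $f_\lambda$ is affine and agrees on $X$ and $Y$, we get $\langle X - Y \mid X_\lambda\rangle = f_\lambda(X) - f_\lambda(Y) = 0$. The crucial additional ingredient is that $X$ and $Y$ both lie in $\mathcal H$, hence $X - Y$ has trace zero, which yields $\langle X - Y \mid \varrho_*\rangle = \tfrac{1}{mn}\tr(X - Y) = 0$ for free.

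Substituting the decomposition $X_\lambda = (1-\lambda)\varrho_* + \lambda \varrho$ into the first orthogonality and using the second, the $(1-\lambda)$-term drops out and we are left with $\lambda \langle X - Y \mid \varrho\rangle = 0$. Because $\lambda \neq 0$, this forces $\langle X - Y \mid \varrho\rangle = 0$, and combining once more with $\langle X - Y \mid \varrho_*\rangle = 0$ gives $\langle X - Y \mid \varrho - \varrho_*\rangle = 0$, which is exactly the required perpendicularity.

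I do not anticipate a genuine obstacle here; the only mildly subtle point is recognizing that the trace-one normalization automatically supplies the orthogonality to $\varrho_*$, which reduces the statement to a one-line linearity computation. Without the hypothesis $\lambda \neq 0$, the function $f_\lambda$ becomes the constant $\tfrac{1}{mn}$ on $\mathcal H$ and its ``level set'' is all of $\mathcal H$, explaining why that case is excluded.
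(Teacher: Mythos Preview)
Your proof is correct and follows essentially the same approach as the paper: both use that $\langle X-Y\mid X_\lambda\rangle=0$ on the level set together with the trace-one condition $\langle X-Y\mid\varrho_*\rangle=0$ to conclude orthogonality to the direction $\varrho-\varrho_*$. The only cosmetic difference is that the paper fixes the reference point $X_\nu$ where the family meets the level set, whereas you take two arbitrary points $X,Y$ in the level set.
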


\begin{proof}
Take $\nu$ such that $f_\lambda(X_\nu)=\alpha$. Then we have
$X\in f^{-1}_\lambda(\alpha)$ if and only if
$\lan X-X_\nu|X_\lambda\ran=0$ if and only if
$\lan X-X_\nu|X_\lambda-\varrho_*\ran=0$,
which means that $X-X_\nu$ is perpendicular to the one parameter family  since $X_\lambda-\varrho_*\neq 0$.
\end{proof}

For a closed convex set $C$ in $\mathcal H$, we define the dual $C^\circ$ by
$$
C^\circ=\{X\in\mathcal H: \lan X|Y\ran\ge 0\ {\text{\rm for every}}\ Y\in C\},
$$
which is also a closed convex set.
Recall that $\blockpos_k$ and $\mathcal S_k$ are dual to each other, that is, we have $\blockpos_k^\circ=\mathcal S_k$
and $\mathcal S_k^\circ=\blockpos_k$.

\begin{proposition}
Suppose that $C$ is a closed convex subset of the affine space $\mathcal H$.
Then we have the following;
\begin{enumerate}
\item[(i)] $\varrho_*$ is an interior point in $C$ if and only if $C^\circ$ is compact,
\item[(ii)] $C$ is compact if and only if $\varrho_*$ is an interior point in $C^\circ$,
\item[(iii)] if $C$ is compact and $\varrho_*$ is an interior point of $C$, then we have $C = C^{\circ\circ}$.
\end{enumerate}
\end{proposition}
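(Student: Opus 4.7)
The plan is to deduce all three statements from classical polar duality by shifting the problem by $\varrho_*$. Since $\mathcal H$ is an affine space modeled on the real vector space $V$ of traceless Hermitian matrices in $M_m\ot M_n$, the assignment $X\mapsto h:=X-\varrho_*$ is an affine homeomorphism $\mathcal H\to V$ sending $\varrho_*$ to $0$, compacts to compacts, and interior points to interior points.

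The computation driving the reduction is the identity, valid for $X=\varrho_*+h$ and $Y=\varrho_*+k$ in $\mathcal H$,
\[
\lan X|Y\ran \;=\; \tfrac{1}{mn} + \lan h|k\ran,
\]
which follows because the cross terms $\lan\varrho_*|k\ran$ and $\lan h|\varrho_*\ran$ vanish by $\tr h=\tr k=0$. Setting $D:=C-\varrho_*\subseteq V$, the defining inequality $\lan X|Y\ran\ge 0$ becomes $\lan h|k\ran\ge-\tfrac 1{mn}$, so
\[
C^\circ - \varrho_* \;=\; \bigl\{k\in V:\ \lan h|k\ran\ge-\tfrac 1{mn}\ \text{for all}\ h\in D\bigr\} \;=\; -\tfrac 1{mn}\, D^\ast,
\]
where $D^\ast:=\{k\in V:\lan h|k\ran\le 1\ \text{for all}\ h\in D\}$ is the standard polar of $D$.

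At this point I would invoke the classical finite-dimensional polar duality in the form: (a) $0$ lies in the interior of $D$ iff $D^\ast$ is bounded; (b) $D$ is bounded iff $0$ lies in the interior of $D^\ast$; (c) if $D$ is closed, convex, and $0$ is interior to $D$, then $D^{\ast\ast}=D$. Since $C^\circ$ is automatically closed as an intersection of closed half-spaces, and closed-plus-bounded equals compact in finite dimension, statements (a), (b), (c) translate to (i), (ii), (iii) via the affine homeomorphism $X\leftrightarrow X-\varrho_*$ together with the identity $C^\circ-\varrho_*=-\tfrac{1}{mn}D^\ast$.

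The one piece of bookkeeping to watch is (iii): iterating the shift-scale-reflect construction, one uses $(tE)^\ast=t^{-1}E^\ast$ and $(-E)^\ast=-E^\ast$ for $t>0$ to verify
\[
(C^\circ)^\circ-\varrho_* \;=\; -\tfrac 1{mn}\,\bigl(-\tfrac 1{mn}D^\ast\bigr)^\ast \;=\; D^{\ast\ast},
\]
whence the classical bipolar theorem gives $C^{\circ\circ}=C$. I do not expect any genuine obstacle beyond this translation and the invocation of polar duality; the whole content of the proposition is that the usual polar theory for convex bodies around the origin transports to the affine picture with $\varrho_*$ playing the role of the origin.
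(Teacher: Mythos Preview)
Your reduction is correct, and it is a genuinely different route from the paper's. The key identity $\lan X|Y\ran=\tfrac{1}{mn}+\lan X-\varrho_*|Y-\varrho_*\ran$ and the resulting identification $C^\circ-\varrho_*=-\tfrac{1}{mn}D^\ast$ are right, and the bookkeeping for the double polar goes through as you wrote. By contrast, the paper argues each item by hand: for (i) it bounds the matrix entries of $W\in C^\circ$ explicitly from the inclusion of a small ball about $\varrho_*$ in $C$, and for the converse it produces an unbounded ray in $C^\circ$ via a separating/supporting hyperplane through $\varrho_*$; for (ii) it uses the same bilinear identity you used, but only to show $B(\varrho_*,a)\subset B(\varrho_*,b)^\circ$ when $ab\le\tfrac{1}{mn}$, and then bootstraps the converse from (i); for (iii) it passes to the closed cone $\tilde C$ generated by $C$ and invokes cone biduality $(\tilde C)^{\bullet\bullet}=\tilde C$ rather than the bipolar theorem for convex bodies. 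Your approach is shorter and more conceptual, packaging all three parts into one translation followed by textbook polar duality; the paper's approach is more self-contained, never citing the bipolar theorem or the polar-boundedness criteria as black boxes, at the cost of longer ad hoc arguments.
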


\begin{proof}
We first prove the \lq only if\rq\ parts of (i). Suppose that $C$ contains a ball $B(\varrho_*,\varepsilon)$ and $W \in C^\circ$.
By
$$
0 \le \lan W | \left(\varrho_* \pm {\tfrac \varepsilon  2} |i\ran\lan j| \pm {\tfrac \varepsilon 2} |j\ran\lan i|\right)  \ran
= {\tfrac 1 {mn}} \pm \varepsilon {\rm Re} W_{ij},\qquad i\neq j,
$$
we see that the off-diagonal entries of $W$ satisfy
$$
- {\tfrac 1 {\varepsilon mn}} \le {\rm Re} W_{ij} \le {\tfrac 1 {\varepsilon mn}},\qquad i\neq j.
$$
Similarly,
$$
0 \le \lan W ~|~ \left( \varrho_* \pm {\tfrac \varepsilon 2} {\rm i} |i\ran\lan j| \mp {\tfrac\varepsilon  2} {\rm i} |j\ran\lan i| \right) \ran
= {\tfrac 1 {mn}} \pm \varepsilon {\rm Im} W_{ij}
$$
implies
$$
- {\tfrac 1 {\varepsilon mn}} \le {\rm Im} W_{ij} \le {\tfrac 1 {\varepsilon mn}},\qquad i\neq j.
$$

On the other hand,
$$
0 \le \lan W | \left(\varrho_* \pm {\tfrac \varepsilon  2} |i\ran\lan i| \mp {\tfrac \varepsilon  2} |j\ran\lan j|\right)  \ran
= {\tfrac 1  {mn}} \pm {\tfrac\varepsilon  2} W_{ii} \mp {\tfrac \varepsilon  2} W_{jj}
$$
implies
$- {\tfrac 2 {\varepsilon mn}} + W_{jj} \le W_{ii} \le {\tfrac 2 {\varepsilon mn}} + W_{jj}$.
Summing over all $1 \le j \le mn$, we see that diagonal entries satisfy
$$
- {\tfrac 2 {\varepsilon mn}} + {\tfrac 1 {mn}} \le W_{ii} \le {\tfrac 2 {\varepsilon mn}} + {\tfrac 1 {mn}},
$$
and this proves the \lq only if\rq\ part of (i).

For the \lq if\rq\ part of (i), we suppose that $\varrho_*$ is not an interior point.
If $\varrho_*$ is a boundary point then we take a supporting hyperplane through $\varrho_*$;
if $\varrho_*\notin C$ then we take a hyperplane separating $C$ and $\varrho_*$, and take a parallel
hyperplane through $\varrho_*$. In any cases, there exists a hyperplane $\mathcal H^0$ through $\varrho_*$
such that $C$ is contained in the closed half-space $\mathcal H^0\sqcup\mathcal H^+$ of $\mathcal H$. We take $\varrho\in\mathcal H^+$
so that $\varrho-\varrho_*$ is perpendicular to the hyperplane $\mathcal H^0$,
and consider the one parameter family $X_\lambda=(1-\lambda)\varrho_*+\lambda\varrho$. For any $X\in \mathcal H^0\sqcup\mathcal H^+$,
we take $\nu$ such that $0=\lan X-X_\nu|X_\nu-\varrho_*\ran$. Then we have $\nu\ge 0$, and
$\lan X-X_\nu|X_\lambda\ran=\lan X-X_\nu|X_\lambda-\varrho_*\ran=0$ for every $\lambda$. We see that
$$
\lan X|X_\lambda\ran
=\lan X_\nu| X_\lambda\ran =\tfrac 1{mn}+\lambda\nu(\lan\varrho|\varrho\ran-\tfrac 1{mn})
$$
is  nonnegative for every $\lambda\ge 0$. Since $C\subset \mathcal H^0\sqcup\mathcal H^+$, we see that
$C^\circ$ contains $X_\lambda$ for every $\lambda\ge 0$, and we conclude that $C^\circ$ is not compact.

For the \lq only if\rq\ part of (ii), we first note the identity
$$
\lan X|\varrho\ran=\lan X-\varrho_*|\varrho-\varrho_*\ran+\tfrac 1{mn},\qquad X,\varrho\in\mathcal H.
$$
From this, we see that the relation
$B(\varrho_*,a)\subset B(\varrho_*,b)^\circ$ holds whenever $ab\le \tfrac 1{mn}$.
Therefore $C\subset B(\varrho_ *,M)$ implies
$B(\varrho_*,\tfrac 1{mnM})\subset B(\varrho_*,M)^\circ\subset C^\circ$.
For the \lq if\rq\ part of (ii), we suppose that $\varrho_*$ is an interior point of $C^\circ$. Then $C^{\circ\circ}$
is compact by (i), which implies that $C$ is also compact, because $C$ is a closed subset of $C^{\circ\circ}$.

Now, it remains to prove (iii). We denote by
$$
\tilde C := \{ \lambda \varrho : \varrho \in C, \lambda \ge 0 \}
$$
the convex cone generated by $C$ and the origin. Then $\tilde C$ is a closed convex cone, since $C$ is compact.
We also denote by $(\tilde C)^\bullet$ the usual dual cone of $\tilde C$  temporarily:
$$
(\tilde C)^\bullet := \{Y \in (M_m \otimes M_n)_h :  \lan Y | X \ran \ge 0\ {\text{\rm for every}}\  X \in \tilde C \}.
$$
Then we have $C=\tilde C\cap\mathcal H=(\tilde C)^{\bullet\bullet}\cap\mathcal H$ by the usual duality
for the closed convex cones. We also have $(\tilde C)^\bullet \cap \mathcal H = C^\circ$, which implies
$C^{\circ\circ}=(\widetilde{C^\circ})^\bullet\cap\mathcal H$. Therefore, it suffices to show
$(\tilde C)^{\bullet\bullet}=(\widetilde{C^\circ})^\bullet$, which is equivalent to
\begin{equation}\label{xxxxx}
(\tilde C)^{\bullet}=\widetilde{C^\circ}.
\end{equation}
Since $\varrho_*$ is an interior point of $C$, it is also an interior point of $\tilde C$, and so we have
$\tr(Y)=mn\lan Y|\varrho_*\ran >0$ for every nonzero $Y\in(\tilde C)^{\bullet}$ by \cite[Proposition 2.3.1]{kye_lec_note}.
Using this, it is easy to see the relation (\ref{xxxxx}).
\end{proof}

It is well known \cite{Gurvits_Barnum} that $\varrho_*$ is an interior point of the convex set $\mathcal S_1$,
and so we see that $\blockpos_1=\mathcal S_1^\circ$ is compact. Therefore, the convex sets $\mathcal S_k$ and $\blockpos_k$
are compact, and $\varrho_*$ is an interior point of them for every $k=1,2,\dots, m\meet n$, where $m\meet n=\min\{m,n\}$.

Throughout this note, we consider compact convex sets in which the maximally mixed state $\varrho_*$
is an interior point with respect to the topology on $\mathcal H$.
For such a compact convex set $C$ in $\mathcal H$,
we define two numbers $\gamma^+[C]>0$ and $\gamma^-[C]<0$ satisfying
$$
X_\lambda\in C\ \Longleftrightarrow\ \gamma^-[C]\le\lambda\le\gamma^+[C].
$$
In order to consider the hyperplane $\mathcal H_\nu^0$ through $X_\nu$ which is perpendicular
to the one parameter family $\{X_\lambda\}$, we note that the following are equivalent;
\begin{itemize}
\item
$X\in\mathcal H^0_\nu$, that is, $\lan X - X_{\nu} | \varrho-\varrho_* \ran =0$, % (Geometric)
\item
$\lan X - X_{\nu} | \varrho \ran =0$, % ($\tr=1$)
\item
$\lan X - X_{\nu} | X_\lambda \ran =0$ for every $\lambda$,
\item
$\lan X - X_{\nu} | X_\lambda \ran =0$ for some $\lambda \ne 0$.
\end{itemize}
Then $\mathcal H_\nu^0$ is a level set of the function $f_\lambda$
for every nonzero real numbers $\lambda$.

\begin{proposition}\label{pseudo_w}
Suppose that $\{X_\lambda\}$ is given by {\rm (\ref{one-para})}. For
a positive number $\nu>0$ and a compact convex set $C$
with an interior point $\varrho_*$, the following are equivalent;
\begin{enumerate}
\item[{\rm (i)}]
$\lan X_\nu | X_\lambda \ran\ge 0$ for every $\lambda\in [\gamma^-[C],\gamma^+[C]]$,
\item[{\rm (ii)}]
$\lan X_\nu | X_{\gamma^-[C]} \ran\ge 0$,
\item[{\rm (iii)}]
$C^\circ\cap \mathcal H_\nu^0$ is nonempty.
\end{enumerate}
For a negative number $\nu<0$, the above {\rm (i)}, {\rm (iii) and the following are equivalent:
\begin{enumerate}
\item[{\rm (ii)}${}^\prime$]
$\lan X_\nu | X_{\gamma^+[C]} \ran\ge 0$.
\end{enumerate}
}
\end{proposition}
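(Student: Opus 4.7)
My plan is to prove the cycle (i) $\Leftrightarrow$ (ii), (iii) $\Rightarrow$ (i), and (ii) $\Rightarrow$ (iii) for $\nu>0$, with the case $\nu<0$ handled by the same argument after the obvious symmetry.

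The equivalences (i) $\Leftrightarrow$ (ii) and (i) $\Leftrightarrow$ (ii)$'$ follow directly from identity (\ref{id}): for fixed $\nu$, the map $\lambda \mapsto \lan X_\nu | X_\lambda \ran$ is affine with slope $\nu(\|\varrho\|_\HS^2 - \tfrac{1}{mn})$, so strictly increasing for $\nu>0$ and strictly decreasing for $\nu<0$. Hence its minimum over $[\gamma^-[C], \gamma^+[C]]$ is attained at $\gamma^-[C]$ in the first case and at $\gamma^+[C]$ in the second. The implication (iii) $\Rightarrow$ (i) is also short, using the bulleted equivalent descriptions of $\mathcal H^0_\nu$ listed just before the statement: for any $X \in C^\circ \cap \mathcal H^0_\nu$ one has $\lan X | X_\lambda \ran = \lan X_\nu | X_\lambda \ran$ for every $\lambda$, and for $\lambda \in [\gamma^-[C], \gamma^+[C]]$ the point $X_\lambda$ lies in $C$, so $X \in C^\circ$ yields $\lan X_\nu | X_\lambda \ran = \lan X | X_\lambda \ran \ge 0$.

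The heart of the proof is (ii) $\Rightarrow$ (iii), where I would produce an explicit witness in $C^\circ \cap \mathcal H^0_\nu$. Since $\gamma^-[C]$ is the extreme negative parameter, $X_{\gamma^-[C]}$ is a boundary point of $C$ in $\mathcal H$; the identity $\tilde C \cap \mathcal H = C$, verified by a trace argument as in the proof of Proposition 2.2(iii), then forces $X_{\gamma^-[C]}$ to be a boundary point of the closed convex cone $\tilde C$ in the ambient Hermitian space. Standard convex cone duality produces a nonzero $W \in (\tilde C)^\bullet$ with $\lan W | X_{\gamma^-[C]} \ran = 0$, and \cite[Proposition 2.3.1]{kye_lec_note} gives $\tr(W) = mn\lan W | \varrho_* \ran > 0$ because $\varrho_*$ is an interior point of $\tilde C$. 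Normalizing, $W_0 := W/\tr(W) \in (\tilde C)^\bullet \cap \mathcal H = C^\circ$.

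To finish, I would interpolate between $\varrho_* \in C^\circ$ and $W_0 \in C^\circ$. Using the identity $\lan Y | X_\lambda \ran = \tfrac{1}{mn} + \lambda \lan Y | \varrho - \varrho_* \ran$ valid for any $Y \in \mathcal H$, the relation $\lan W_0 | X_{\gamma^-[C]} \ran = 0$ pins down $\lan W_0 | \varrho - \varrho_* \ran = -1/(mn\,\gamma^-[C])$. Setting $t := \nu\,(\|\varrho\|_\HS^2 - \tfrac{1}{mn})\, mn\, |\gamma^-[C]|$, hypothesis (ii) gives $t \in [0,1]$. Then $X := (1-t)\varrho_* + tW_0$ is a convex combination of elements of $C^\circ$, hence lies in $C^\circ$, and a direct computation yields $\lan X | \varrho - \varrho_* \ran = \nu(\|\varrho\|_\HS^2 - \tfrac{1}{mn}) = \lan X_\nu | \varrho - \varrho_* \ran$, which is precisely the condition $X \in \mathcal H^0_\nu$. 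The main obstacle I foresee is the bookkeeping between the affine slice $\mathcal H$ and the ambient cone $\tilde C$ when invoking cone duality to produce $W$; once the supporting functional is in hand, the rest is essentially mechanical, and the $\nu<0$ case is obtained verbatim after replacing $\gamma^-[C]$ by $\gamma^+[C]$.
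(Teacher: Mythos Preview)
Your proof is correct and follows essentially the same route as the paper's. The only cosmetic difference is in (ii) $\Rightarrow$ (iii): the paper phrases this as an intermediate-value argument, noting that $\{\lan W|X_{\gamma^-[C]}\ran : W\in C^\circ\}$ is an interval containing both $0$ and $\tfrac1{mn}$ and then picking $W$ with $\lan W|X_{\gamma^-[C]}\ran = \lan X_\nu|X_{\gamma^-[C]}\ran \in [0,\tfrac1{mn}]$, whereas you make the interpolation between $\varrho_*$ and your $W_0$ explicit with the parameter $t$; these are the same argument.
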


\begin{proof}
For the brevity, we write $\gamma^+=\gamma^+[C]$ and $\gamma^-=\gamma^-[C]$.
The equivalence of (i) and (ii) is clear, since $\lambda\mapsto \lan X_\nu | X_\lambda \ran$ is increasing by $\nu > 0$.
Suppose that $W\in C^\circ\cap \mathcal H_\nu^0$. Then we have $\lan X_\nu | X_{\gamma^-} \ran=\lan W | X_{\gamma^-} \ran$ because $X_\nu$
and $W\in \mathcal H_\nu^0$ belong to the same level set of the function $X\mapsto \lan X | X_{\gamma^-} \ran$. This proves
(iii) $\Longrightarrow$ (ii), since we have $\lan W | X_{\gamma^-} \ran\ge 0$ by $X_{\gamma^-}\in C$ and
$W\in C^\circ$. For the direction (ii) $\Longrightarrow$ (iii),
We first note that the set
$$
\{ \lan W|X_{\gamma^-}\ran\in\mathbb R: W\in C^\circ\}
$$
is an interval, which is contained in $[0,+\infty)$, and contains $\lan \varrho_* | X_{\gamma^-}\ran=\frac 1{mn}$. This set also contains $0$,
because $X_{\gamma^-}$ is a boundary point of $C$. See \cite[Proposition 2.3.1]{kye_lec_note}.
Because $0\le \lan X_\nu | X_{\gamma^-} \ran \le \lan X_0 | X_{\gamma^-} \ran= \frac 1{mn}$ by $\nu > 0$ and $\gamma^-<0$,
we can take $W\in C^\circ$
such that $\lan X_\nu | X_{\gamma^-} \ran=\lan W | X_{\gamma^-} \ran$. Then $W\in \mathcal H_\nu^0$, and this completes the proof.
\end{proof}

For a given real number $\nu$, we define the open half-spaces
$$
\mathcal H^+_\nu=\{X\in\mathcal H: \lan X-X_\nu|\varrho-\varrho_*\ran > 0\},\qquad
\mathcal H^-_\nu=\{X\in\mathcal H: \lan X-X_\nu|\varrho-\varrho_*\ran < 0\}.
$$
Then $\mathcal H$ is the disjoint union $\mathcal H^-_\nu\sqcup \mathcal H^0_\nu \sqcup \mathcal H^+_\nu$.
If $\lambda>0$ then we have
$$
\mathcal H^+_\nu=\{X\in\mathcal H: \lan X-X_\nu|X_\lambda\ran > 0\},\qquad
\mathcal H^-_\nu=\{X\in\mathcal H: \lan X-X_\nu|X_\lambda\ran < 0\}.
$$
On the other hand, we have
$$
\mathcal H^+_\nu=\{X\in\mathcal H: \lan X-X_\nu|X_\lambda\ran < 0\},\qquad
\mathcal H^-_\nu=\{X\in\mathcal H: \lan X-X_\nu|X_\lambda\ran > 0\},
$$
whenever $\lambda<0$.

We denote by $\tilde\gamma^+[C^\circ]$ and $\tilde\gamma^-[C^\circ]$ the maximum and the minimum
of $\nu$'s satisfying conditions in Proposition \ref{pseudo_w}, respectively.
The inequalities $\tilde\gamma^-[C^\circ]\le \gamma^-[C^\circ]$ and $\gamma^+[C^\circ]\le \tilde\gamma^+[C^\circ]$
are clear by the property (iii).
Property (ii) of Proposition \ref{pseudo_w} tells us that $\nu=\tilde\gamma^+[C^\circ]$ if and only if
$\lan X_\nu | X_{\gamma^-[C]} \ran=0$. On the other hand, we see that
$C^\circ\subset \mathcal H^0_{\tilde\gamma^+[C^\circ]} \sqcup \mathcal H^-_{\tilde\gamma^+[C^\circ]} $ by the property (iii), and
$\mathcal H^0_{\tilde\gamma^+[C^\circ]} $ is a supporting hyperplane to $C^\circ$.
The hyperplane $\mathcal H^0_{\tilde\gamma^-[C^\circ]} $
also supports $C^\circ$.

\begin{theorem}\label{main}
Suppose that $C$ is a compact convex set in $\mathcal H$ with an interior point $\varrho_*$,
and two numbers $\nu>0$ and $\mu<0$ satisfy $\lan X_\nu | X_\mu\ran=0$. Then we have the following;
\begin{enumerate}
\item[{\rm (i)}]
$\nu=\tilde\gamma^+[C^\circ]$ if and only if $\mu=\gamma^-[C]$.
\item[{\rm (ii)}]
If $\nu$ satisfies the conditions in
Proposition \ref{pseudo_w} and $X_\mu\in C$, then we have $\nu=\tilde\gamma^+[C^\circ]$.
\end{enumerate}
For $\nu<0$ and $\mu>0$ satisfying $\lan X_\nu | X_\mu\ran=0$, we also have the following;
\begin{enumerate}
\item[{\rm (iii)}]
$\nu=\tilde\gamma^-[C^\circ]$ if and only if $\mu=\gamma^+[C]$.
\item[{\rm (iv)}]
If $\nu$ satisfies the conditions in
Proposition \ref{pseudo_w} and $X_\mu\in C$, then we have $\nu=\tilde\gamma^-[C^\circ]$.
\end{enumerate}
\end{theorem}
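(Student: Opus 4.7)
The plan is to reduce the theorem to the explicit scalar identity (\ref{id}),
\[
\lan X_\nu | X_\lambda \ran = c\,\nu\lambda + \tfrac{1}{mn}, \qquad c := \|\varrho\|_\HS^2 - \tfrac{1}{mn} > 0,
\]
combined with the characterization $\nu = \tilde\gamma^+[C^\circ] \Longleftrightarrow \lan X_\nu | X_{\gamma^-[C]} \ran = 0$ already recorded in the paragraph following Proposition \ref{pseudo_w}. Everything after that is elementary sign-bookkeeping.

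For (i), the key observation is that for fixed $\nu > 0$ the equation $\lan X_\nu | X_\mu \ran = 0$ has a unique negative solution $\mu = -1/(mnc\nu)$, so the assignment $\nu \mapsto \mu$ is a bijection between the positive and negative half-lines. If $\nu = \tilde\gamma^+[C^\circ]$, then $\lan X_\nu | X_{\gamma^-[C]} \ran = 0$ by the characterization, and uniqueness forces $\mu = \gamma^-[C]$. The converse is immediate: $\mu = \gamma^-[C]$ gives $\lan X_\nu | X_{\gamma^-[C]} \ran = \lan X_\nu | X_\mu \ran = 0$.

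For (ii), I would exploit strict monotonicity of $\lambda \mapsto \lan X_\nu | X_\lambda \ran$, which is increasing since $\nu > 0$. The hypothesis $X_\mu \in C$ together with $\mu < 0$ yields $\gamma^-[C] \le \mu < 0$, so monotonicity gives
\[
\lan X_\nu | X_{\gamma^-[C]} \ran \le \lan X_\nu | X_\mu \ran = 0.
\]
On the other hand, the assumption that $\nu$ satisfies condition (ii) of Proposition \ref{pseudo_w} delivers the reverse inequality $\lan X_\nu | X_{\gamma^-[C]} \ran \ge 0$. Equality must therefore hold, and strict monotonicity forces $\gamma^-[C] = \mu$; part (i) then yields $\nu = \tilde\gamma^+[C^\circ]$.

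Statements (iii) and (iv) are the mirror images and would be handled by reversing signs throughout. When $\nu < 0$ the map $\lambda \mapsto \lan X_\nu | X_\lambda \ran$ is strictly decreasing, condition (ii)${}^\prime$ of Proposition \ref{pseudo_w} yields the analogous equivalence $\nu = \tilde\gamma^-[C^\circ] \Longleftrightarrow \lan X_\nu | X_{\gamma^+[C]} \ran = 0$ (by the same extremality argument as in the positive case), and the same unique-root plus monotonicity-sandwich argument runs with $\gamma^-[C]$ replaced by $\gamma^+[C]$. I do not anticipate a genuine obstacle here; the only point requiring minor care is checking that the maximality defining $\tilde\gamma^+[C^\circ]$ turns the inequality in condition (ii) into an equality, which follows from continuity together with the fact that $\nu \mapsto \lan X_\nu | X_{\gamma^-[C]} \ran$ is strictly monotone in $\nu$ (by (\ref{id}) and $\gamma^-[C] < 0$).
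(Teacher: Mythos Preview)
Your proof is correct. For part (i) your argument coincides with the paper's: both use the characterization $\nu=\tilde\gamma^+[C^\circ]\Leftrightarrow\lan X_\nu|X_{\gamma^-[C]}\ran=0$ together with the injectivity of $\lambda\mapsto\lan X_\nu|X_\lambda\ran$.

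For part (ii) you take a slightly different route. The paper argues geometrically: using $C=C^{\circ\circ}$ and $\lan X_\nu|X_\mu\ran=0$, it shows that $X_\mu\in C$ is equivalent to $C^\circ\subset\mathcal H^0_\nu\sqcup\mathcal H^-_\nu$, so that $\mathcal H^0_\nu$ becomes a supporting hyperplane for $C^\circ$ (given condition (iii) of Proposition~\ref{pseudo_w}), forcing $\nu$ to be maximal. You instead stay on the scalar level: from $X_\mu\in C$ you extract $\gamma^-[C]\le\mu$, use monotonicity to sandwich $\lan X_\nu|X_{\gamma^-[C]}\ran$ between $0$ and $0$, and then invoke (i). Your argument is a touch more self-contained since it avoids the bidual identity $C=C^{\circ\circ}$ and the half-space bookkeeping; the paper's version, on the other hand, makes the supporting-hyperplane picture more visible. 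Both are short and the difference is mainly stylistic.
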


\begin{proof}
By the definition, we have $\lan X_{\tilde\gamma^+[C^\circ]} | X_{\gamma^-[C]} \ran = 0$.
Since the affine map $\lambda \mapsto \lan X_{\tilde\gamma^+[C^\circ]} | X_\lambda \ran$ is injective,
$\nu=\tilde\gamma^+[C^\circ]$ implies $\mu=\gamma^-[C]$.
The reverse direction is just the definition of $\tilde\gamma^+[C^\circ]$.

For the statement (ii), we note that $X_\mu\in C$ if and only if $W\in C^\circ$ implies $\lan W|X_\mu\ran\ge 0$
if and only if $W\in C^\circ$ implies $\lan W-X_\nu|X_\mu\ran\ge 0$
if and only if $C^\circ\subset \mathcal H^0_\nu\sqcup\mathcal H^-_\nu$.
The exactly same arguments work for the statements (iii) and (iv).
\end{proof}

We have studied the number  $\beta^\pm_k=\gamma^\pm[\blockpos_k]$ extensively in \cite{han_kye_beta}.
In this note, we define
$$
\tilde\beta^\pm_k:=\tilde\gamma^\pm[\blockpos_k],\qquad
\sigma^\pm_k:=\gamma^\pm[\mathcal S_k],\qquad
\tilde\sigma^\pm_k:=\tilde\gamma^\pm[\mathcal S_k],
$$
so that $X_\lambda\in\mathcal S_k$ if and only if $\sigma^-_k\le \lambda\le \sigma^+_k$.
holds.
Then we have
$$
\tilde\beta^-_k\le \beta^-_k\le \sigma^-_k<0
<\sigma^+_k\le \beta^+_k\le\tilde\beta^+_k.
$$
See {\sc Figure 2}.

\begin{figure}
\begin{center}
\setlength{\unitlength}{0.8 truecm}
\begin{picture}(12,5)
\drawline(-1,2)(12,2)
\put(4.88,2.1){$0$}
\put(5,2){\circle*{0.1}}

\put(3,2){\circle*{0.1}}
\put(6.2,2){\circle*{0.1}}
\qbezier(3,2)(3,4)(5,4)
\qbezier(5,4)(8.2,3.8)(6.2,2)
\qbezier(3,2)(3,0.1)(6.2,2)
\put(5.9,1.55){$\sigma^+_k$}
\put(3.12,1.55){$\sigma^-_k$}
\put(3.12,2.2){$\tilde\sigma^-_k$}
\put(7,2.2){$\tilde\sigma^+_k$}
\put(6.3,2.8){$\mathcal S_k$}
\drawline(3,-0.5)(3,5)
\drawline(6.98,-0.5)(6.98,5)
\put(6.98,2){\circle*{0.1}}

\put(7.4,1){$\mathcal D$}
\qbezier(5,0)(2,0)(2,2)
\qbezier(2,2)(2,4)(5,4)
\qbezier(5,4)(8,4)(8,2)
\qbezier(8,2)(8,0)(5,0)

\put(8.6,3.3){$\blockpos_{k}$}
\qbezier(5,5)(3.5,5)(1,2)
\qbezier(1,2)(0,0.5)(0.25,0.25)
\qbezier(0.25,0.25)(0.5,0)(5,0)
\qbezier(5,5)(10,5)(10,2)
\qbezier(10,2)(10,0)(5,0)

\put(1,2){\circle*{0.1}}
\put(10,2){\circle*{0.1}}

\drawline(10,-0.5)(10,5)
\put(10.1,1.45){$\beta^+_{k}$}
\put(10.1,2.2){$\tilde\beta^+_{k}$}

\drawline(0.2,-0.5)(0.2,5)
\put(0.2,2){\circle*{0.1}}
\put(-0.39,2.2){$\tilde\beta^-_{k}$}
\put(0.9,1.55){$\beta^-_{k}$}

\put(-0.4,0.1){$W$}
\put(0.21,0.39){\circle*{0.1}}

\end{picture}
\end{center}
\caption{The horizontal line represents the one parameter family $\{X_\lambda\}$, and the vertical lines represent
supporting hyperplane to the convex sets $\blockpos_k$ and $\mathcal S_k$.}
\end{figure}
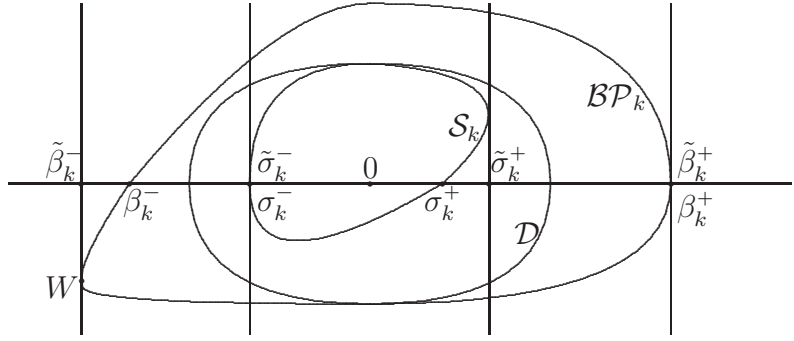

The facial structures of  the convex set $\mathcal D$ of all states is well known. We write $\delta^\pm=\gamma^\pm[\mathcal D]$,
and $\tilde\delta^\pm=\tilde\gamma^\pm[\mathcal D]$.
For a nontrivial subspace $E\subset \mathbb C^m\ot\mathbb C^n$,
the set $F_E$ of all states whose ranges are in $E$ is a face of $\mathcal D$ which is exposed, and every proper face of $\mathcal D$
arises in this way. We take the projection state $\varrho_E:=\frac 1{d}P_E$, where $P_E$ is the projection onto the subspace $E$
and  $d$ is the dimension of $E$. Then $\varrho_E$ is an interior point of $F_E$.
We show that the hyperplane through $\varrho_E$ which is perpendicular to the line connecting $\varrho_E$ and $\varrho_*$
is a supporting hyperplane to the convex set $\mathcal D$.
To see this, we take the one parameter family arising from $\varrho_E$. Then it is easy to see
$$
\delta^-=\beta^-_{m\meet n}=\frac {-d}{mn-d}.
$$
This also can be seen by the formulae (2) and (3) of \cite{han_kye_beta}, and we have $X_{\delta^-}=\varrho_{E^\perp}$,
the projection state onto the orthogonal complement $E^\perp$.
See  {\sc Figure 1} of \cite{han_kye_beta}. We also have $\delta^+=\sigma^+_{m\meet n}=1$, and
$$
\lan X_{\delta^-}|X_{\delta^+}\ran=\lan \varrho_{E^\perp} | \varrho_E\ran=0.
$$
Because $\mathcal D^\circ=\mathcal D$, we have
$$
\tilde\delta^-=\delta^-=\frac {-d}{mn-d},\qquad
\tilde\delta^+=\delta^+=1,
$$
by Theorem \ref{main} (i), (iii).
Therefore, we see that both hyperplanes $\mathcal H^0_{\delta^+}$ and $\mathcal H^0_{\delta^-}$
support the convex set $\mathcal D$.
We also note that a state $\varrho$ belongs to the face $F_E$ if and only if
$\lan \varrho-\varrho_E|\varrho_E\ran=0$, and so we conclude that $F_E=\mathcal D\cap\mathcal H^0_{\delta^+}$.

The number $\tilde\sigma^-_1$ may be even less than the number $\beta^-_1$ in general. Consider the diagonal state
$\varrho=p|00\ran\lan 00|+q|01\ran\lan 01|$ in the two qubit system with $p>q>0$ and $p+q=1$.
The linear map with $X_\lambda^\varrho$ as its Choi matrix has a commutative range, thus it is completely positive if and only if it is positive.
This implies that $||\varrho||_{S(1)} = ||\varrho|| = p$, thus
we have
$$
\beta^-_1=\frac {-1}{4p-1}.
$$
We consider the state $|11\ran\lan 11|\in \mathcal S_1$ then
$\lan |11\ran\lan 11|-X_\nu|\varrho\ran=0$ implies $\nu=\frac{-1}{8p^2-8p+3}$.
Since $\lan X_\nu | X_1 \ran = \lan X_\nu | \varrho \ran = \lan |11\ran\lan11| | \varrho \ran = 0$
and $X_1 \in \mathcal S_1$, we may apply
Theorem \ref{main} (iv) to see
$$
\tilde\sigma^-_1= \frac{-1}{8p^2-8p+3}.
$$
Therefore, we see that $\tilde\sigma^-_1<\beta^-_1$
by $\tfrac 12<p<1$.

%%%%%%%%%%%%%%%%%%%%%%%%%%%%%%%%%%%%%%%%%%%%%%%%%%%%%%%%%%%%%%%%%%%%%
%%%%%%%%%%%%%%%%%%%%%%%%%%%%%%%%%%%%%%%%%%%%%%%%%%%%%%%%%%%%%%%%%%%%%%%%%%%%%%%%%%%%%%%%%
%%%%%%%%%%%%%%%%%%%%%%%%%%%%%%%%%%%%%%%%%%%%%%%%%%%%%%%%%%%%%%%%%%%%%%%%%%%%%%%%%%%%%%%%%
%%%%%%%%%%%%%%%%%%%%%%%%%%%%%%%%%%%%%%%%%%%%%%%%%%%%%%%%%%%%%%%%%%%%%%%%%%%%%%%%%%%%%%%%%
%%%%%%%%%%%%%%%%%%%%%%%%%%%%%%%%%%%%%%%%%%%%%%%%%%%%%%%%%%%%%%%%%%%%%%%%%%%%%%%%%%%%%%%%%
%%%%%%%%%%%%%%%%%%%%%%%%%%%%%%%%%%%%%%%%%%%%%%%%%%%%%%%%%%%%%%%%%%%%%%%%%%%%%%%%%%%%%%%%%
\section{Variations of Werner states and isotropic states}

In this section, we consider the case when $k=1$ and $\varrho$ is a rank one projection, that is, a pure state.
By Schmidt decomposition, it is enough to
take $|\xi\ran=\sum_{i=0}^{n-1}p_i |ii\ran \in\mathbb C^n\ot\mathbb C^n$
with $\sum_{i=0}^{n-1}p_i^2=1$ and $p_0\ge p_1\ge\dots\ge p_{n-1}\ge 0$,
and put
$$
\varrho=|\xi\ran\lan\xi|
=\sum_{i=0}^{n-1}p_i^2|ii\ran\lan ii| +\sum_{i\neq j} p_i p_j |ii\ran\lan jj|
\in M_n\ot M_n.
$$
We recall that $\varrho$ is the Choi matrix of a completely positive map
of the form $\ad_s:a\mapsto s^*as$ with $s = \sum_{i=0}^{n-1} p_i |i\ran\lan i|$,
which is known \cite{{marcin_exp},{kye_exp}} to generate an exposed extreme ray of the convex cone of all positive maps.

\begin{theorem}\label{main_th}
Suppose that $\varrho=|\xi\ran\lan\xi|\in M_n\ot M_n$ is a pure state, and the Schmidt coefficients of $|\xi\ran$ is given by
$p_0\ge\cdots \ge p_{n-1}\ge 0$.
Then we have the following;
$$
\begin{aligned}
&\tilde\beta^-_1=-\frac{n^2p_0p_1+1}{n^2-1},\qquad
\beta^-_1=\dfrac{-1}{n^2p_0^2-1},\qquad
\delta^-=\sigma^-_1=\tilde\sigma^-_1=\dfrac{-1}{n^2-1},\\
& \sigma^+_1=\frac 1{n^2p_0p_1+1},\qquad
\tilde\sigma^+_1=\frac{n^2p_0^2-1}{n^2-1},\qquad
\delta^+=\beta^+_1=\tilde\beta^+_1=1.
\end{aligned}
$$
\end{theorem}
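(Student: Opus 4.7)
The plan is to read off each of the six quantities in the theorem from Theorem \ref{main} together with two explicit separable decompositions, namely for $X_\mu$ with $\mu=1/(1+n^2p_0p_1)$ and for $X_{\delta^-}=(I_{n^2}-\varrho)/(n^2-1)$. Throughout, the duality pairings allow each $\tilde{}$-quantity to be obtained from its un-tilded partner on the opposite side of $\varrho_*$ via an orthogonality condition $\lan X_\nu|X_\mu\ran=0$, so once a few values are pinned down the rest propagate by Theorem \ref{main}(i)--(iv).

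The first block ($\tilde\beta^-_1$, $\beta^-_1$, $\sigma^+_1$, $\tilde\sigma^+_1$) is handled together. I would begin with $\beta^-_1=-1/(n^2p_0^2-1)$ obtained from $\|\varrho\|_{S(1)}=p_0^2$ via formula (8) and (3) of \cite{han_kye_beta}. Writing $X_{\beta^-_1}$ out one sees the natural upper-summand decomposition $(n^2p_0^2-1)X_{\beta^-_1}=\sum_{i>j}\varrho_{ij}^\Gamma+D$ with $D$ diagonal nonnegative, so each $\tfrac{1}{2p_ip_j}\varrho_{ij}^\Gamma$ is a legitimate $1$-blockpositive candidate witness. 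Solving $\lan \tfrac{1}{2p_ip_j}\varrho_{ij}^\Gamma-X_\nu|\varrho\ran=0$ gives $\nu_{ij}=-(n^2p_ip_j+1)/(n^2-1)$; the smallest value comes from $(i,j)=(1,0)$, suggesting the candidate $\tilde\beta^-_1=-(n^2p_0p_1+1)/(n^2-1)$ and (by $\lan X_\nu|X_\mu\ran=0$) a corresponding candidate $\mu=1/(1+n^2p_0p_1)$ for $\sigma^+_1$.

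The main obstacle is certifying the candidate for $\sigma^+_1$: I must show the explicit matrix $X_\mu$ in (\ref{X_mu}) is separable. My plan is to use the discrete averaging trick: set $|\xi_\alpha\ran=\sum_i\sqrt{p_i}\alpha_i|i\ran$, $|\eta_\alpha\ran=|\xi_\alpha\ran\ot|\bar\xi_\alpha\ran$, and average $|\eta_\alpha\ran\lan\eta_\alpha|$ over $\alpha_t\in\{\pm1,\pm{\rm i}\}$. The fourth-roots-of-unity character sums annihilate every term $|ij\ran\lan k\ell|$ unless the multiset $\{i,j\}$ equals $\{k,\ell\}$, leaving precisely the diagonal, the $|ij\ran\lan ij|$, and the $|ii\ran\lan kk|$ blocks of $X_\mu$ with the correct coefficients $p_ip_j$. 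The discrepancy between these coefficients and the $p_0p_1$-weighted coefficients appearing in $X_\mu$ is nonnegative entrywise on the diagonal (using $p_0p_1\ge p_ip_j$ does \emph{not} hold in general, so more carefully: the off-diagonal $|ij\ran\lan ij|$ block of $X_\mu$ has coefficient $p_0p_1$, whereas the averaged separable state has $p_ip_j$, and $p_0p_1\ge p_ip_j$ fails when either index is $0$ or $1$; the cleanest route is therefore to write $X_\mu$ as a positive scalar multiple of the averaged state plus a nonnegative diagonal correction $D$, using that the monotonicity of $\{p_i\}$ forces exactly the needed inequality on diagonal terms). Once $X_\mu\in\mathcal S_1$ is established, Theorem \ref{main}(iv) yields simultaneously $\sigma^+_1=\mu$ and $\tilde\beta^-_1=\nu$. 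The value $\tilde\sigma^+_1=(n^2p_0^2-1)/(n^2-1)$ is then forced by $\lan X_{\tilde\sigma^+_1}|X_{\beta^-_1}\ran=0$ via Theorem \ref{main}(iii).

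For the second block ($\sigma^-_1=\tilde\sigma^-_1=\delta^-$, $\tilde\beta^+_1=\beta^+_1=\delta^+=1$), the key reduction is to prove that $X_{\delta^-}=(I_{n^2}-\varrho)/(n^2-1)$, which is the projection state onto $\varrho^\perp$, is already separable; for then $\sigma^-_1$ cannot be smaller than $\delta^-$, forcing equality, and the corresponding tilded quantities collapse by the discussion of $\mathcal H^0_{\delta^\pm}$ surrounding \textsc{Figure 1}. I will produce a separable decomposition by the same discrete-averaging principle, but now using antisymmetric-type product vectors $|\eta_{ij}^\alpha\ran=(\sqrt{p_j}|i\ran+\alpha\sqrt{p_i}|j\ran)\ot(\sqrt{p_j}|i\ran-\bar\alpha\sqrt{p_i}|j\ran)$ over $\alpha\in\{\pm1,\pm{\rm i}\}$: the cross-terms that survive are exactly $-p_ip_j(|ii\ran\lan jj|+|jj\ran\lan ii|)$, so summing over $i>j$ and comparing term-by-term with $I_{n^2}-\varrho$ identifies the difference as a nonnegative diagonal matrix. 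Finally $\tilde\beta^+_1=\beta^+_1=\delta^+=1$ is automatic since $X_1=\varrho\in\mathcal S_1$ (it is even pure), and then Theorem \ref{main}(iii) applied to $C=\blockpos_1$ forces $\tilde\beta^+_1$ to coincide with $\beta^+_1$. I expect the two averaging/comparison arguments to be the delicate steps, and the rest to be bookkeeping via the orthogonality identity (\ref{id}).
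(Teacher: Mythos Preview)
Your overall architecture matches the paper's: the two averaging constructions (over $\alpha\in\{\pm1,\pm{\rm i}\}^n$ for $X_\mu$, and over $\alpha\in\{\pm1,\pm{\rm i}\}$ for each $\varrho_{ij}$ to handle $X_{\delta^-}$) are exactly the separability certificates the paper uses, and the bookkeeping through Theorem~\ref{main} is the intended route. Two points need repair.

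First, your parenthetical worry that ``$p_0p_1\ge p_ip_j$ does \emph{not} hold in general'' is unfounded: for $i\neq j$ at least one index is $\ge 1$, so $p_ip_j\le p_0p_1$ always. This is precisely why $(1+n^2p_0p_1)X_\mu$ minus the averaged separable state is a nonnegative diagonal matrix; your eventual conclusion is right, but the hedging is misleading.

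Second, and more seriously, your final paragraph contains a genuine error. You write ``$X_1=\varrho\in\mathcal S_1$ (it is even pure)''; but a pure state lies in $\mathcal S_1$ only when it is a product state, i.e.\ when $p_1=0$. In the generic case $\varrho$ is entangled and $\varrho\notin\mathcal S_1$, so this does not establish anything about $\beta^+_1$ or $\tilde\beta^+_1$. Moreover, Theorem~\ref{main}(iii) with $C=\blockpos_1$ relates $\tilde\gamma^-[\blockpos_1^\circ]=\tilde\sigma^-_1$ to $\gamma^+[\blockpos_1]=\beta^+_1$, not $\tilde\beta^+_1$ to $\beta^+_1$. The correct deduction, which the paper uses, runs the other way: once you have shown $X_{\delta^-}\in\mathcal S_1$ you get $\sigma^-_1=\delta^-=-1/(n^2-1)$; then Theorem~\ref{main}(i) with $C=\mathcal S_1$ (so $C^\circ=\blockpos_1$) and the identity~(\ref{id}) give $\tilde\beta^+_1=1$; finally the chain $1=\delta^+\le\beta^+_1\le\tilde\beta^+_1=1$ forces $\beta^+_1=1$. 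The same orthogonality with $\beta^+_1=1$ then yields $\tilde\sigma^-_1=-1/(n^2-1)$ via Theorem~\ref{main}(iii) with $C=\blockpos_1$.
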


\begin{proof}
We have
$$
X_\lambda
=\sum_{i=0}^{n-1}\left(\dfrac{1-\lambda}{n^2}+\lambda p_i^2\right) |ii\ran\lan ii|
+\sum_{i\neq j}\left(\dfrac{1-\lambda}{n^2}\right) |ij\ran\lan ij|
+\sum_{i\neq j}\lambda p_i p_j |ii\ran\lan jj|.
$$
We also have $\|\varrho\|_{S(1)}=p_0^2$, and $\beta^-_1=\dfrac{-1}{n^2 p_0^2-1}$
by (8) and (3) in  \cite{han_kye_beta}.
Note that
$$
\begin{aligned}
X_{\beta^-_1}
&=\left(1-\dfrac{-1}{n^2 p_0^2-1}\right)\cdot\dfrac 1{n^2} I_{n^2}+ \dfrac{-1}{n^2 p_0^2-1}\varrho\\
&=\dfrac{p_0^2}{n^2p_0^2-1}\, I_{n^2} -  \dfrac{1}{n^2p_0^2-1}\varrho\\
&=\frac 1{n^2p_0^2-1}\left(p_0^2 I_{n^2}-\varrho\right).
\end{aligned}
$$

Now, we take $|\xi_{ij}\ran=\sqrt{p_i p_j}|ij\ran-\sqrt{p_i p_j}|ji\ran\in\mathbb C^n\ot\mathbb C^n$ for $i>j$, and put
$$
\varrho_{ij}=|\xi_{ij}\ran\lan\xi_{ij}|
=p_ip_j\left(|ij\ran\lan ij| +|ji\ran\lan ji| - |ij\ran\lan ji|- |ji\ran\lan ij|\right).
$$
We take the partial transpose $\varrho_{ij}^\Gamma$ of $\varrho_{ij}$ to get
$$
\varrho_{ij}^\Gamma
=p_ip_j \left(|ij\ran\lan ij| +|ji\ran\lan ji| - |ii\ran\lan jj|- |jj\ran\lan ii|\right),
$$
and we see that
$$
(n^2p_0^2-1)X_{\beta^-_1} = \sum_{i>j} \varrho_{ij}^\Gamma +D
$$
with a diagonal matrix $D$ with nonnegative entries.
Therefore,
$$
\frac 1{2p_ip_j}\varrho_{ij}^\Gamma=\frac 12(|ij\ran\lan ij|+|ji\ran\lan ji|-|ii\ran\lan jj| -|jj\ran\lan ii|)
$$
is a natural candidate of $1$-blockpositive matrix
which may give rise to the number $\tilde\beta^-_1$.
The next step is to look for $\nu$ so that $\lan \tfrac 1{2p_ip_j}\varrho_{ij}^\Gamma-X_\nu|\varrho\ran=0$. We have
$\lan \tfrac 1{2p_ip_j}\varrho_{ij}^\Gamma|\varrho\ran =-p_ip_j$
and
$$
\lan X_\nu|\varrho\ran
=\lan (1-\nu)\varrho_*+\nu\varrho|\varrho\ran=(1-\nu)\tfrac 1{n^2}+\nu=\tfrac 1{n^2}[1+(n^2-1)\nu].
$$
Therefore, we have $\nu=-\tfrac{n^2p_ip_j+1}{n^2-1}$. We take the lowest number
$$
\nu:=-\frac{n^2p_0p_1+1}{n^2-1}
$$
among them. We also take
$$
\mu:=\frac 1{1+n^2p_0p_1},
$$
which satisfies the relation $\lan X_\nu|X_{\mu}\ran=0$. We note that $\mu$ is the maximum of $\lambda$'s such that
$X_\lambda$ is of PPT. We have
\begin{equation}\label{X_mu}
X_\mu=\frac 1{1+n^2p_0p_1}
\left(\sum_{i=0}^{n-1} (p_0p_1+p_i^2)|ii\ran\lan ii|
+\sum_{i\neq j}p_0p_1 |ij\ran\lan ij|
+\sum_{i\neq k}p_i p_k|ii\ran\lan kk\ran\right).
\end{equation}

We proceed to show that $X_\mu$ is separable, in order to conclude that $\sigma^+_1=\mu$ and $\tilde\beta^-_1=\nu$
by Theorem \ref{main} (iii).
For a given $n$-tuple $\alpha=(\alpha_0,\alpha_1,\cdots,\alpha_{n-1})$ of complex numbers of modulus one, we take
$$
|\xi_\alpha\ran=\sum_{i=0}^{n-1} \sqrt{p_i}\alpha_i|i\ran\in\mathbb C^n,
$$
and
$$
|\eta_\alpha\ran=
|\xi_\alpha\ran|\bar \xi_\alpha\ran
=\sum_{i,j=0}^{n-1} \sqrt{p_i p_j}\alpha_i\bar \alpha_j|ij\ran\in\mathbb C^n\ot\mathbb C^n.
$$
Then we have
$$
\begin{aligned}
|\eta_\alpha\ran\lan \eta_\alpha|
&=\sum_{i,j,k,\ell=0}^{n-1} \sqrt{p_i p_j p_k p_\ell}\alpha_i\bar \alpha_j\bar\alpha_k\alpha_\ell |ij\ran\lan k\ell|\\
&=\sum_{i=0}^{n-1}p_i^2|ii\ran\lan ii|
  +\sum_{i\neq j}p_ip_j |ij\ran\lan ij|
  +\sum_{i\neq k}p_i p_k |ii\ran\lan kk|
  +\sum_{\text{\rm others}} A_{i,j,k,\ell}|ij\ran\lan k\ell|.
\end{aligned}
$$
We take $\alpha_t=\pm 1,\pm{\rm i}$ for each $t=0,1,\dots,n-1$, and averaging all of them to get
$$
\frac 1{4^n}\sum_\alpha |\eta_\alpha\ran\lan \eta_\alpha|
=\sum_{i=0}^{n-1}p_i^2|ii\ran\lan ii|
  +\sum_{i\neq j}p_ip_j |ij\ran\lan ij|
  +\sum_{i\neq k}p_i p_k |ii\ran\lan kk|.
$$
Comparing with (\ref{X_mu}), we have
$$
X_\mu=\frac 1{4^n(1+n^2p_0p_1)}\sum_\alpha |\eta_\alpha\ran\lan \eta_\alpha| +D,
$$
where $D$ is a diagonal matrix with nonnegative entries, since $\{p_i:i=0,1,\dots,n-1\}$ is decreasing.
This proves that $X_\mu$ is separable, and we conclude
$$
\sigma^+_1=\frac 1{1+n^2p_0p_1},\qquad
\tilde\beta^-_1=-\frac{n^2p_0p_1+1}{n^2-1}.
$$
Therefore, we see that $\mathcal H^0_{\tilde\beta^-_1}$ is a supporting hyperplane to the convex set $\blockpos_k$.
This is through
$$
\frac 1{2p_0p_1}\varrho^\Gamma_{10}
=\frac 12(|10\ran\lan 10|+|01\ran\lan 01|-|11\ran\lan 00|-|00\ran\lan 11|) \in\blockpos_1
$$
which is the Choi matrix of a completely copositive map
of the form $a\mapsto s^*a^\ttt s$ with $s = {1 \over \sqrt{2}} (|1\ran\lan0| - |0\ran\lan1|)$.

By the relation
$\lan X_{\tilde\sigma^+_1} | X_{\beta^-_1}\ran=0$, we have
$$
\tilde\sigma^+_1=\frac{n^2p_0^2-1}{n^2-1}.
$$
We also  note that $\lan |00\ran\lan 00|- X_{\tilde\sigma^+_1}|\varrho\ran=0$, and so
the hyperplane $\mathcal H^0_{\tilde\sigma^+_1}$ meets the convex set $\mathcal S_1$
at the separable state $|00\ran\lan 00|\in\mathcal S_1$.

Now, we proceed to determine $\sigma^-_1$ and $\tilde\beta^+_1$. For this purpose, it suffices to show that
$$
X_{\delta^-}=\tfrac 1{n^2-1}(I_{n^2}-\varrho)
$$
is separable. For $i=0,1,\dots,n-1$ with $i> j$ and a complex number $\alpha$ with $|\alpha|=1$, take
$$
\begin{aligned}
|\eta_{ij}^\alpha\ran
&=( \sqrt{p_j}|i\ran+\alpha \sqrt{p_i}|j\ran)\ot ( \sqrt{p_j}|i\ran-\bar \alpha \sqrt{p_i}|j\ran)\\
&= p_j|ii\ran -\bar\alpha  \sqrt{p_j p_i}|ij\ran
      +\alpha  \sqrt{p_i p_j}|ji\ran -  p_i|jj\ran
\in\mathbb C^n\ot\mathbb C^n,
\end{aligned}
$$
and the separable state
$$
\begin{aligned}
\varrho_{ij}
=\tfrac 14 \sum_{\alpha=\pm 1, \pm {\rm i}}|\eta_{ij}^\alpha\ran\lan \eta_{ij}^\alpha|
&=p_j^2|ii\ran\lan ii| +p_ip_j|ij\ran\lan ij| + p_ip_j|ji\ran\lan ji|+p_i^2|jj\ran\lan jj|\\
&\qquad -\left(p_i p_j |ii\ran\lan jj| +  p_ip_j |jj\ran\lan ii|\right).
\end{aligned}
$$
Summing up all of them, we get
$$
\begin{aligned}
\tilde\varrho
&=\sum_{i=0}^{n-1}(1-p_i^2)|ii\ran\lan ii| +\sum_{i\neq j} p_ip_j |ij\ran\lan ij|
-\left( |\xi\ran\lan\xi|-\sum_{i=0}^{n-1}p_i^2 |ii\ran\lan ii|\right)\\
&= \sum_{i=0}^{n-1}|ii\ran\lan ii| +\sum_{i\neq j}p_ip_j|ij\ran\lan ij| -|\xi\ran\lan\xi|,
\end{aligned}
$$
which is separable. Comparing with
$$
(n^2-1)X_{\delta^-}=I_{n^2}-\varrho=\sum_{i=0}^{n-1}|ii\ran\lan ii| +\sum_{i\neq j}|ij\ran\lan ij| -|\xi\ran\lan\xi|,
$$
we see that $(n^2-1)X_{\delta^-}$ is the sum of $\tilde\varrho$ and a diagonal matrix with nonnegative entries, and
we conclude that $X_{\delta^-}$ is separable. Therefore, we have
$$
\sigma^-_k=\delta^-=-\tfrac 1{n^2-1},\qquad \tilde\beta^+_k=\beta^+_k=\delta^+=1,
$$
for $k=1,2,\dots,n$, and see that $\mathcal H^0_1$ is a supporting hyperplane to $\blockpos_1$
through the state $X_1=\varrho$.
\end{proof}

In the inequalities $\tilde\beta^-_1\le\beta^-_1$ and $\sigma^+_1\le \tilde\sigma^+_1$
in Theorem \ref{main_th}, the equalities hold when
$(p_0^2,p_1^2)=(1,0)$ or $(\frac 1n,\frac 1n)$.
When $(p_0^2,p_1^2)=(\frac 1n,\frac 1n)$, we have isotropic states with
$$
\tilde\beta^-_1=\beta^-_1=\dfrac{-1}{n-1},\quad
\delta^-=\sigma^-_1=\dfrac{-1}{n^2-1},\quad
\sigma^+_1=\frac 1{n+1},\quad
\delta^+=\beta^+_1=\tilde\beta^+_1=1.
$$
See \cite[Section 1.7]{kye_lec_note}. When $(p_0^2,p_1^2)=(1,0)$,
we have
$$
\tilde\beta^-_1=\beta^-_1=\delta^-=\sigma^-_1=\frac{-1}{n^2-1},\qquad
\sigma^+_1=\delta^+=\beta^+_1=\tilde\beta^+_1=1.
$$
Since $\varrho$ is an extreme point of the convex set $\mathcal S_1$, we see that its dual face $\mathcal H^0_{\tilde\beta^-_1}\cap\blockpos_1$
is a maximal face of $\blockpos_1$, and every maximal face of $\blockpos_1$ arises in this way. See \cite{{kye-canad},{kye-korean}}.

We take the partial transpose $X_\lambda^\Gamma$ of $X_\lambda$, then we also have the following;
\begin{itemize}
\item
$X_\lambda^\Gamma$ is a state if and only if $-\tfrac 1{n^2p_0^2-1}\le\lambda\le\tfrac 1{n^2p_0p_1+1}$,
\item
$X_\lambda^\Gamma$ is separable if and only if $-\tfrac 1{n^2-1}\le\lambda\le\tfrac 1{n^2p_0p_1+1}$ if and only if $X_\lambda^\Gamma$
is of PPT.
\end{itemize}
When $p_i=\tfrac 1{\sqrt n}$ for $i=0,1,\dots, n-1$, we recover the Werner states.

%%%%%%%%%%%%%%%%%%%%%%%%%%%%%%%%%%%%%%%%%%%%%%%%%%%%%%%%%%%%%%%%%%%%%
%%%%%%%%%%%%%%%%%%%%%%%%%%%%%%%%%%%%%%%%%%%%%%%%%%%%%%%%%%%%%%%%%%%%%%%%%%%%%%%%%%%%%%%%%
%%%%%%%%%%%%%%%%%%%%%%%%%%%%%%%%%%%%%%%%%%%%%%%%%%%%%%%%%%%%%%%%%%%%%%%%%%%%%%%%%%%%%%%%%
%%%%%%%%%%%%%%%%%%%%%%%%%%%%%%%%%%%%%%%%%%%%%%%%%%%%%%%%%%%%%%%%%%%%%%%%%%%%%%%%%%%%%%%%%
%%%%%%%%%%%%%%%%%%%%%%%%%%%%%%%%%%%%%%%%%%%%%%%%%%%%%%%%%%%%%%%%%%%%%%%%%%%%%%%%%%%%%%%%%
%%%%%%%%%%%%%%%%%%%%%%%%%%%%%%%%%%%%%%%%%%%%%%%%%%%%%%%%%%%%%%%%%%%%%%%%%%%%%%%%%%%%%%%%%
\section{Conclusion}

In this note, we have considered the problem to find supporting hyperplanes to $k$-blockpositive matrices of trace one
whose perpendicular line is through the maximally mixed state, and showed that this problem is equivalent to
find the interval for states with Schmidt numbers not greater than $k$ on the line. When $k=m\meet n$,
we saw that supporting hyperplanes to density matrices are perpendicular to
the line between the two projection states arising from a subspace and its orthogonal complement.
In the general cases with $k< m\meet n$, it seems to be a challenging project to find
all supporting hyperplanes.

When $k=1$ and $\varrho$ is a pure state, we found supporting hyperplanes to $1$-blockpositive matrices of trace one
which is perpendicular to the one parameter family through the maximally mixed state $\varrho_*$ and $\varrho$.
We first determined the interval for $1$-blockpositivity and decomposed the blockpositive matrix at the endpoint
into the sum of partial transposes of product states.
When $\varrho$ is the maximally entangled state, this gives rise to the isotropic states, together with the Werner states
by taking the partial transpose. Our method gives a simple decomposition of separable Werner states
into the sum of product states.

\end{document}